\newtheorem{theorem}{Theorem}[section]
\newtheorem{lemma}[theorem]{Lemma}
\theoremstyle{definition}
\theoremstyle{remark}
\newtheorem{remark}[theorem]{Remark}
\numberwithin{equation}{section}
\begin{document}

\title{On the Construction of Finite Oscillator Dictionary}

%    Information for first author
\author{Rongquan Feng}
%    Address of record for the research reported here
\address{LMAM, School of Mathematical Sciences, Peking
University, Beijing 100871, P.R. China}
%    Current address
\email{fengrq@math.pku.edu.cn}
%    \thanks will become a 1st page footnote.
\thanks{The first author was supported by NSF of China (No. 10990011). The fourth author is supported by
China Postdoctoral Science Foundation funded project.}

%    Information for second author
\author{Zhenhua Gu}
\address{School of Mathematical Sciences, Suzhou
University,Suzhou 215006, P.R. China} \email{guzhmath@gmail.com}

\author{Zilong Wang}
\address{LMAM, School of Mathematical Sciences, Peking
University, Beijing 100871, P.R. China} \email{wzlmath@gmail.com}

\author{Hongfeng Wu}
\address{Academy of Mathematics and Systems Science, Chinese Academy of Sciences, Beijing 100190, P.R. China}
\email{whfmath@gmail.com}

\author{Kai Zhou}
\address{Academy of Mathematics and Systems Science, Chinese Academy of Sciences, Beijing 100190, P.R. China}
\email{kzhou@amss.ac.cn}

%    General info
\subjclass[2000]{11F27, 94A12}

%\date{January 1, 2001 and, in revised form, June 22, 2001.}

%\dedicatory{This paper is dedicated to our advisors.}

\keywords{Weil representation, finite harmonic oscillator, torus,
DFT}

\begin{abstract}
A finite oscillator dictionary which has important applications in
sequences designs and the compressive sensing was introduced by
Gurevich, Hadani and Sochen. In this paper, we first revisit closed
formulae of the finite split oscillator dictionary $\mathfrak{S}^s$
by a simple proof. Then we study the non-split tori of the group
$SL(2,\mathbb{F}_p)$. Finally, An explicit algorithm for computing
the finite non-split oscillator dictionary $\mathfrak{S}^{ns}$ is
described.
\end{abstract}

\maketitle

%\section*{This is an unnumbered first-level section head}
%This is an example of an unnumbered first-level heading.

%% The correct journal style for \specialsection is all uppercase; a known bug
%% in amsart.cls prevents this, so input must be uppercase until it is fixed.
%\specialsection*{This is a Special Section Head}
%\specialsection*{THIS IS A SPECIAL SECTION HEAD}
%This is an example of a special section head%
%%%%%%%%%%%%%%%%%%%%%%%%%%%%%%%%%%%%%%%%%%%%%%%%%%%%%%%%%%%%%%%%%%%%%%%%
%\footnote{Here is an example of a footnote. Notice that this
%footnote text is running on so that it can stand as an example of
%how a footnote with separate paragraphs should be written.
%\par
%And here is the beginning of the second paragraph.}%
%%%%%%%%%%%%%%%%%%%%%%%%%%%%%%%%%%%%%%%%%%%%%%%%%%%%%%%%%%%%%%%%%%%%%%%%

\section{Introduction}
Let $\mathbb{F}_p$ ($p>3$) be the finite field with $p$ elements,
$\mathcal{H}=\mathbb{C}(\mathbb{F}_p)$  be a Hilbert space
containing all the functions from $\mathbb{F}_p$ to $\mathbb{C}$
with Hermitian product
\begin{equation*}
    \langle f,g\rangle=\sum_{t\in\mathbb{F}_p}f(t)\overline{g(t)},
\end{equation*}
for $f, g \in \mathcal{H}$, and let $U(\mathcal{H})$ be the group of
unitary operators on $\mathcal{H}$. Define $L_{\tau}$, $M_{\omega}$,
and $F\in U(\mathcal{H})$ by $$L_\tau f(t)=f(t+\tau),$$ $$M_\omega
f(t)=e^{\frac{2\pi i}{p}\omega t}f(t),$$ and
$$\hat{f}=F(f)(j)=\frac{1}{\sqrt{p}}\sum_{t\in
\mathbb{F}_p}e^{\frac{2\pi i}{p} jt}f(t),$$ for $\tau$,
$\omega\in\mathbb{F}_p$ and $f \in \mathcal{H}$. These operators
$L_{\tau}$, $M_{\omega}$ and $F$ are called the time shift, the
phase shift and the Fourier transform respectively which are
important operators in signal processings.

Denote by $SL(2,\mathbb{F}_p)$ the special linear group of
$2\times2$ nonsingular matrices over $\mathbb{F}_p$ with determinant
one. By studying the Weil representation $\rho:
SL(2,\mathbb{F}_p)\rightarrow U(\mathcal {H}),$ a finite oscillator
dictionary $\mathfrak{S}$ was given in \cite{GHS08} which has the
following properties:
\par (i) Autocorrelation (ambiguity function). For every
$\varphi\in\mathfrak{S}$,
\begin{equation*}
    |\langle\varphi,M_\omega L_\tau\varphi\rangle|=\left\{
    \begin{array}{ll}
        1, &  \text{ if }  \tau=\omega=0;\\
        \leq \frac{2}{\sqrt{p}}, & \text{ otherwise. }\\
    \end{array}
    \right.
\end{equation*}
\par (ii) Cross correlation (cross ambiguity function). For every
$\phi, \varphi\in\mathfrak{S}$, $\phi \neq \varphi$,
\begin{equation*}
    |\langle\phi,M_\omega L_\tau\varphi\rangle|\leq\frac{4}{\sqrt{p}},\ \ \ \text{ for
    every } \tau, \omega\in\mathbb{F}_p.
\end{equation*}
\par (iii) Supremum. For every $\varphi\in\mathfrak{S}$,
\begin{equation*}
    \max\{|\varphi(t)|:t\in\mathbb{F}_p\}\leq\frac{2}{\sqrt{p}}.
\end{equation*}
\par (iv) Fourier invariance. For every
$\varphi\in\mathfrak{S}$, its Fourier transform $\hat{\varphi}$ is
(up to multiplication by a unitary scalar) also in $\mathfrak{S}$.

The properties above make the finite oscillator dictionary
$\mathfrak{S}$ ideal for some applications. Please refer to
\cite{GHS08, Wang-Gong} for the application in the discrete radar
and communication systems \cite{Golomb-Gong}, and refer to
\cite{GHS082, GH10} for the application in the emerging fields of
sparsity and the compressive sensing. Besides, the Weil
representation on $SL(2,\mathbb{F}_p)$ provides a new approach for
the diagonalization of the discrete Fourier transform \cite{GH09,
Wang-Gong10}, and a proof for quadratic reciprocity \cite{GHH}.
Therefore, vectors in $\mathfrak{S}$ should be given in closed form
or by an algorithm. In fact, $\mathfrak{S}$ can be divided into two
parts naturally, split case $\mathfrak{S}^s$ and non-split case
$\mathfrak{S}^{ns}$. The split case $\mathfrak{S}^s$ was given by an
algorithm in \cite{GHS08}, and then closed formulae was given in
\cite{Wang-Gong}. In this paper, by studying the tori of
$SL(2,\mathbb{F}_p)$, closed formulae of the split case
$\mathfrak{S}^s$ were revisited by a simple proof, and an explicit
algorithm for constructing the non-split case $\mathfrak{S}^{ns}$ is
described.

The rest of the paper is organized as follows. The oscillator system
constructed in \cite{GHS08} is described in Section 2. Closed
formulae of the split case of the finite oscillator dictionary
$\mathfrak{S}^s$ are revisited in Section 3. In Section 4, based on
the stucture of the non-split tori of the group
$SL(2,\mathbb{F}_p)$, an explicit algorithm for computing the
non-split case of the finite oscillator dictionary
$\mathfrak{S}^{ns}$ is described.

\section{Finite Oscillator Dictionary}

In this section, the finite oscillator dictionary proposed by
Gurevich, Hadani and Sochen in \cite{GHS08, GHS082} was introduced.
For more details about the representation theory and the Weil
representation, we refer the reader to
\cite{Shapiro,Terras,Weil,Weyl}.

Since $SL(2, \mathbb{F}_p)$ is generated by $g_a=\left(
\begin{array}{cc}
a & 0 \\
0 & a^{-1} \\
\end{array}
\right)$, $g_b=\left(
\begin{array}{cc}
 1 & 0 \\
 b & 1 \\
 \end{array}
 \right)$, and the Weyl element
$w=\left(
\begin{array}{cc}
 0 & 1 \\
  -1 & 0 \\
  \end{array}
 \right)$,
where $a\in \mathbb{F}_p^*$ and $b \in \mathbb{F}_p$, the Weil
representation $\rho$ can be determined by $\rho(g_a), \rho(g_b)$
and $\rho(w)$ which are given as follows.
\begin{equation}\label{eq-sc1}
\rho(g_a)(f)(t)=\sigma(a)f(a^{-1}t),\\
\end{equation}
\begin{equation}\label{eq-ch1}
\rho(g_b)(f)(t)=\chi(-2^{-1}bt^2)f(t),
\end{equation}
\begin{equation}\label{eq-fo1}
\rho(w)(f)(j)=\frac{1}{\sqrt{p}}\sum_{t\in
\mathbb{F}_p}\chi(tj)f(t),
\end{equation}
where $\chi$ is an additive character of $\mathbb{F}_p$ with
$\chi(a)=e^{\frac{2\pi i}{p} a}$, and $\sigma$ is the Legendre
character, i.e., $\sigma (a)=(\frac{a}{p})$. Here $\rho(w)=F$ is the
discrete Fourier transform. Denote by $S_a$ the operator
$\rho(g_a)$, and by $N_b$ the operator $\rho(g_b)$ for convenience.
For any $g =\left(
\begin{array}{cc}
a & b \\
c & d \\
\end{array}
\right) \in SL_2(\mathbb{F}_p)$, if $b\neq 0$,
$$g=
 \left(
\begin{array}{cc}
a & b \\
(ad-1)b^{-1} & d \\
\end{array}
\right)= \left(
\begin{array}{cc}
b & 0 \\
0 & b^{-1} \\
\end{array}
\right) \left(
\begin{array}{cc}
1 & 0 \\
bd & 1 \\
\end{array}
\right) \left(
\begin{array}{cc}
0 & 1 \\
-1 & 0 \\
\end{array}
\right) \left(
\begin{array}{cc}
1 & 0 \\
ab^{-1} & 1 \\
\end{array}
\right).$$ Then the Weil representation of $g$ is given by
\begin{equation}
\rho(g)=S_{b}\circ N_{bd}\circ F\circ N_{ab^{-1}}.
\end{equation}
If $b=0$, then
$$g=\left(
\begin{array}{cc}
 a & 0 \\
 c & a^{-1} \\
 \end{array}
 \right)=
 \left(
\begin{array}{cc}
 a & 0 \\
 0 & a^{-1} \\
 \end{array}
 \right)\left(
\begin{array}{cc}
 1 & 0 \\
 ac & 1 \\
 \end{array}
 \right).$$
Hence the Weil representation of $g$ can be described as
\begin{equation}
\rho(g)=S_a\circ N_{ac}.
\end{equation}

A maximal {\em algebraic torus} in $SL(2, \mathbb{F}_p)$ is a
maximal commutative subgroup which becomes diagonalizable over the
original field $\mathbb{F}_p$ or over the quadratic extension of
$\mathbb{F}_p$. One standard example of a maximal algebraic torus in
$SL_2(\mathbb{F}_p)$ is the standard diagonal torus
$$A=\left\{\left(
\begin{array}{cc}
 a & 0 \\
 0 & a^{-1} \\
 \end{array}
 \right): a\in \mathbb{F}_p^*
\right\}.$$

Up to conjugation, there are two classes of maximal algebraic tori
in $SL(2, \mathbb{F}_p)$. The first class, called {\em split tori},
consists of those tori which are diagonalizable over $\mathbb{F}_p$.
Every split torus $T$ is conjugated to the standard diagonal torus
$A$, i.e., there exists an element $g \in SL(2, \mathbb{F}_p)$ such
that $g\cdot T \cdot g^{-1}= A$. The second class, called {\em
non-split tori}, consists of those tori which are not diagonalizable
over $\mathbb{F}_p$, but become diagonalizable over the quadratic
extension $\mathbb{F}_{p^2}$ of $\mathbb{F}_p$. In fact, a split
torus is a cyclic subgroup of $SL(2, \mathbb{F}_p)$ with order
$p-1$, while a non-split torus is a cyclic subgroup of $SL(2,
\mathbb{F}_p)$ with order $p+1$.

All split (non-split) tori are conjugated to one another, so the
number of split (non-split) tori equals to the number of elements in
the coset space $SL(2, \mathbb{F}_p)/{N_A}$ ($SL(2,
\mathbb{F}_p)/{N_T}$), where $N_A$ ($N_T$) is the normalizer group
of $A$ (non-split torus $T$). Thus
\begin{equation}
|(SL_2(\mathbb{F}_p)/{N_A})|=\frac{1}{2}p(p+1) \ \ \ \ \ \
\mbox{and}\ \ \ \ \ \ \
|(SL_2(\mathbb{F}_p)/{N_T})|=\frac{1}{2}p(p-1).
\end{equation}

Since every maximal torus $T\in SL(2, \mathbb{F}_p)$ is a cyclic
group, we obtain a decomposition of $\rho_{|T}: T\rightarrow
U(\mathcal{H})$, the restriction of the Weil representation $\rho$
on $T$, corresponding to an orthogonal decomposition of
$\mathcal{H}$ as
\begin{equation}\label{eq-DE}
\rho_{|T}=\bigoplus_{\chi\in \Lambda_T}\chi\ \ \ \ \mbox{and} \ \ \
\mathcal{H}=\bigoplus_{\chi\in\Lambda_T}\mathcal{H}_\chi,
\end{equation}
where $\Lambda_T$ is the collection of all the one dimensional
subrepresentation (character) $\chi: T\rightarrow \mathbb{C}$ of
$T$.

The decomposition (\ref{eq-DE}) depends on the type of $T$. If $T$
is a split torus, $\chi$ is a character given by $\chi:
\mathbb{Z}_{p-1}\rightarrow \mathbb{C}$. We have
$\dim\mathcal{H}_{\chi} = 1$ if $\chi$ is not the Legendre character
$\sigma$, and $\dim\mathcal{H}_{\sigma} = 2$. If $T$ is a non-split
torus,  then $\chi$ is a character given by $\chi:
\mathbb{Z}_{p+1}\rightarrow \mathbb{C}$. We have
$\dim\mathcal{H}_{\chi} = 1$ for every non-quadratic character
$\chi$.

For a given torus $T$, choosing a vector $\varphi_\chi\in
\mathcal{H}_\chi$ of unit norm for each character $\chi \in
\Lambda_T$, we obtain a collection of orthonormal vectors
\begin{equation}
\mathcal{B}_T=\{\varphi_\chi: \chi\in \Lambda_T, \chi\neq \sigma \
\mbox{if} \ T\  \mbox{is split}\}.
\end{equation}
Considering the union of all these collections, we obtain the finite
oscillator dictionary
\begin{equation}
\mathfrak{S}=\{\varphi\in \mathcal{B}_T: T \text{ is a maximal torus
of }SL(2,\mathbb{F}_p)\}=\bigcup_{T}\mathcal{B}_T,
\end{equation}
where $T$ runs through all maximal tori of $SL(2,\mathbb{F}_p)$.

The finite oscillator dictionary $\mathfrak{S}$ is naturally
separated into two sub-dictionaries $\mathfrak{S}^{s}$ and
$\mathfrak{S}^{ns}$ corresponding to the split and non-split tori
respectively. That is, $\mathfrak{S}^{s}$ ($\mathfrak{S}^{ns}$)
consists of the union of $\mathcal{B}_T$ where $T$ runs through all
the split tori (non-split tori) in $SL(2, \mathbb{F}_p)$. Totally
there are $\frac{1}{2}p(p+1)$ \ ($\frac{1}{2}p(p-1)$) split
(non-split) tori which consisting of $p-2$ ($p$) orthonormal vectors
each. Therefore
\begin{equation}
|\mathfrak{S}^{s}|=\frac{1}{2}p(p+1)(p-2)\ \ \ \ \mbox{and} \ \ \
|\mathfrak{S}^{ns}|=\frac{1}{2}p^2(p-1).
\end{equation}

For a given maximal torus $T$, an efficient way to specify the
decomposition (\ref{eq-DE}) is by choosing a generator $g_T\in T$,
the character is determined by the eigenvalue of the linear operator
$\rho(g_T)$, and the character space is corresponding to the
eigenspace naturally. Thus we can diagonalize $\rho(g_T)$ and obtain
the basis $\mathcal{B}_T$. Let $N_T$ be the normalizer of the group
$T$ and $R_T$ be a system of coset representatives of $N_T$ in
$SL(2,\mathbb{F}_p)$. Since all the maximal split or non-split tori
are conjugated to one another, all the maximal tori which are of the
same type with $T$ (split or non-split) can be written as
$gTg^{-1}$, where $g\in R_T$. Since
\begin{equation}
    \mathcal{B}_{gTg^{-1}}=\left\{\rho(g)\varphi:~\varphi\in\mathcal{B}_T \right\},
\end{equation}
the oscillator dictionary $\mathfrak{S}^s$ or $\mathfrak{S}^{ns}$
can be represented by
\begin{equation} \label{OS}
    \bigcup_{g\in R_T}\mathcal{B}_{gTg^{-1}}=\{\rho(g)\varphi: g\in R_T,\varphi\in\mathcal{B}_T\}.
\end{equation}

\section{The Split Case: $\mathfrak{S}^s$}

In this section, we revisit the results in \cite{Wang-Gong} by a
simple proof. Based on (\ref{OS}),
\begin{equation}\label{split}
\mathfrak{S}^s=\{\rho(g)\varphi:g\in R_A,\varphi\in\mathcal{B}_A\},
\end{equation}
we need only to compute $\mathcal{B}_A$ and $R_A$.

Let $\alpha$ be a generator of $\mathbb{F}_p^*$, define the
multiplicative character $\psi_j$ as $\psi_j(\alpha^k)=e^{\frac{2\pi
i}{p-1}jk}$ for $k=0,1,\cdots, p-2$ and $\psi_j(0)=0$ for $j\neq 0$
and $\psi_0(0)=1$. Then $\nabla=\left\{\psi_0, s\psi_1, \cdots,
s\psi_{p-2}\right\}$, where $s=(p-1)^{-1/2}$, is an orthonormal
basis of the Hilbert space $\mathcal{H}$. For $j\neq 0$, by
(\ref{eq-sc1}), we have
$$S_\alpha(s\psi_j)(\alpha^k)=\sigma(\alpha)s\psi_j(\alpha^{-1}\alpha^k)=-s\psi_j(\alpha^{k-1})=-se^{-\frac{2\pi ij}{p-1}}\psi_j(\alpha^k).$$
Noting that $S_\alpha(s\psi_j)(0)=0$, we have
$$S_\alpha(s\psi_j)=-e^{-\frac{2\pi i j}{p-1}}\cdot s\psi_j.$$
Thus $s\psi_j$ is an eigenvector of $S_\alpha$ associated with the
eigenvalue $-e^{-\frac{2\pi i j}{p-1}}\neq -1$. Therefore,
\begin{equation}
    \mathcal{B}_A=\{s\psi_j: 1\leq j\leq p-2\}.
\end{equation}

Combining the result in \cite{Wang-Gong} that
$$R_A=\left\{\left(
\begin{array}{cc}
1 & b \\
c & 1+bc \\
\end{array}
\right): 0 \leq b \leq \frac{p-1}{2}, c\in \mathbb{F}_p\right\}$$
and (\ref{split}), vectors in $\mathfrak{S}^s$ can be described in
closed formulae \cite{Wang-Gong} easily as
$$\mathfrak{S}^s=\{\varphi_{x,y,z}\,:\, \
1 \leq x\leq p-2, 0\leq y\leq p-1, 0\leq z\leq (p-1)/2\},$$ where
$$\varphi_{x,y,0}(t)=\frac{1}{\sqrt{p-1}}\psi_x(t)\chi(yt^2),$$
and
$$\varphi_{x,y,z}(t)=\frac{\chi(yt^2)}{\sqrt{p(p-1)}}\sum_{j=1}^{p-1}\psi_x(j)\chi(-(2z)^{-1}(j-t)^2)
\ \mbox{for} \ z\neq 0.$$

\section{The Non-Split Case: $\mathfrak{S}^{ns}$}
In this section we give the details of the construction of
$\mathfrak{S}^{ns}$. Let us first consider the structure of a
non-split torus.
\begin{lemma}Let $D$ be a non-square element of
$\mathbb{F}_p$. Then
\begin{equation*}
    T_D=\left\{\left(
            \begin{array}{cc}
              x & y \\
              Dy & x \\
            \end{array}
          \right):
          x^2-Dy^2=1,x,y\in\mathbb{F}_p
    \right\}
\end{equation*}
is a maximal non-split torus.
\end{lemma}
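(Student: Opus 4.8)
The plan is to identify $T_D$ with the norm-one subgroup of $\mathbb{F}_{p^2}^{*}$ via an explicit embedding of the quadratic extension $\mathbb{F}_{p^2}$ into $M_2(\mathbb{F}_p)$, and then read off every required property from the arithmetic of $\mathbb{F}_{p^2}$. Since $D$ is a non-square, $X^2-D$ is irreducible over $\mathbb{F}_p$; fix $\delta\in\mathbb{F}_{p^2}$ with $\delta^2=D$, so $\{1,\delta\}$ is an $\mathbb{F}_p$-basis of $\mathbb{F}_{p^2}$. Define $\phi:\mathbb{F}_{p^2}\to M_2(\mathbb{F}_p)$ by
\begin{equation*}
\phi(x+y\delta)=\left(
\begin{array}{cc}
x & y \\
Dy & x \\
\end{array}
\right).
\end{equation*}
A direct comparison of entries (using $\delta^2=D$) shows that $\phi$ is $\mathbb{F}_p$-linear, sends $1$ to the identity, and satisfies $\phi(z_1)\phi(z_2)=\phi(z_1z_2)$; injectivity is clear since $\{1,\delta\}$ is a basis. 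Moreover $\det\phi(x+y\delta)=x^2-Dy^2=N_{\mathbb{F}_{p^2}/\mathbb{F}_p}(x+y\delta)$, so $\phi$ restricts to a group isomorphism from the norm-one subgroup $K=\{z\in\mathbb{F}_{p^2}^{*}:N(z)=1\}$ onto $T_D$. In particular $T_D$ is a subgroup of $SL(2,\mathbb{F}_p)$, and it is commutative because it lies in the commutative ring $\phi(\mathbb{F}_{p^2})$.

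Since $K$ is the kernel of the surjective norm map $N:\mathbb{F}_{p^2}^{*}\to\mathbb{F}_p^{*}$, we get $|T_D|=|K|=(p^2-1)/(p-1)=p+1$, and $T_D$ is cyclic, being isomorphic to a finite subgroup of the multiplicative group of a field. For the splitting type, pick any $g=\phi(x+y\delta)\in T_D$ with $y\neq0$ (possible as $p+1>2$): its characteristic polynomial is $t^2-2xt+1$, with discriminant $4(x^2-1)=4Dy^2$, a non-square in $\mathbb{F}_p$; hence $g$ has no eigenvalue in $\mathbb{F}_p$ and $T_D$ is not diagonalizable over $\mathbb{F}_p$. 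Over $\mathbb{F}_{p^2}$, on the other hand, one checks that $(1,\delta)^{\mathrm{t}}$ and $(1,-\delta)^{\mathrm{t}}$ are common eigenvectors of all of $\phi(\mathbb{F}_{p^2})$, with $\phi(z)$ acting by $z$ and by its conjugate $z^p$ respectively; thus $T_D$ is simultaneously diagonalizable over $\mathbb{F}_{p^2}$, and the two eigenvalues of the chosen $g$ are distinct.

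It remains to prove maximality, for which it suffices to show $T_D$ is self-centralizing in $SL(2,\mathbb{F}_p)$: any commutative subgroup containing $T_D$ then centralizes $T_D$ and so is contained in it. With $g$ as above, its characteristic polynomial $t^2-2xt+1$ is irreducible over $\mathbb{F}_p$ (the discriminant is a non-square), so $F:=\mathbb{F}_p[g]\subseteq M_2(\mathbb{F}_p)$ is a field isomorphic to $\mathbb{F}_{p^2}$, over which $\mathbb{F}_p^{2}$ is one-dimensional. Any $h\in M_2(\mathbb{F}_p)$ commuting with $g$ commutes with all of $F$, hence is $F$-linear on $\mathbb{F}_p^{2}$ and therefore equals multiplication by an element of $F$; that is, the centralizer of $g$ in $M_2(\mathbb{F}_p)$ is $F=\phi(\mathbb{F}_{p^2})$. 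Intersecting with $SL(2,\mathbb{F}_p)$ yields $\phi(K)=T_D$, so $C_{SL(2,\mathbb{F}_p)}(T_D)=T_D$, and since $T_D$ is diagonalizable over $\mathbb{F}_{p^2}$ but not over $\mathbb{F}_p$ it is a maximal non-split torus. (Alternatively, using the fact recalled in Section~2 that maximal non-split tori have order $p+1$: $T_D$ is a commutative subgroup of exactly that order which is diagonalizable over $\mathbb{F}_{p^2}$, so it must be maximal.) The only genuine computations are the verification that $\phi$ is multiplicative and the determinant formula; the one subtlety is the maximality step, which hinges on $T_D$ containing an element with distinct eigenvalues over $\mathbb{F}_{p^2}$ — guaranteed precisely because $p>3$ forces $p+1>2$.
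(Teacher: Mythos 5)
Your proof is correct and takes essentially the same route as the paper: embedding $\mathbb{F}_{p^2}$ into $M_2(\mathbb{F}_p)$ so that $T_D$ is identified with the norm-one subgroup of $\mathbb{F}_{p^2}^{*}$, and then observing that its elements are diagonalizable over $\mathbb{F}_{p^2}$ with eigenvalues outside $\mathbb{F}_p$. You fill in details the paper's terse proof leaves implicit, in particular the self-centralizing argument for maximality, which the paper simply takes as known from the background on maximal tori recalled in Section~2.
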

\begin{proof} Let
\begin{equation*}
    G_D=\left\{\left(
            \begin{array}{cc}
              x & y \\
              Dy & x \\
            \end{array}
          \right):
          x^2-Dy^2\neq0,x,y\in\mathbb{F}_p
    \right\}.
\end{equation*}
It is easy to check that $G_D$ is isomorphic to
$\mathbb{F}_{p^2}^{\ast}$ by the isomorphism given by
\begin{equation}\label{isomorphism}
    \left(
            \begin{array}{cc}
              x & y \\
              Dy & x \\
            \end{array}
          \right)\mapsto x+\sqrt{D}y.
\end{equation}
Thus $T_D$ can be diagonalized over $\mathbb{F}_{p^2}$ as a subgroup
of $G_D$. Note that the eigenvalues of $\left(
            \begin{array}{cc}
              x & y \\
              Dy & x \\
            \end{array}
          \right)\in T_D$ lie in $\mathbb{F}_{p^2}\setminus \mathbb{F}_{p}$. Hence
          $T_D$ is a non-split torus.
\end{proof}

\begin{theorem}\label{lemno} $(1)$ Let $D$ be a non-square element of
$\mathbb{F}_p$ and $s+\sqrt{D}t$ be a primitive element of
$\mathbb{F}_{p^2}$, then $T_D$ is a cyclic group of order $p+1$ with
a generator
\begin{equation*}
    \left(
       \begin{array}{cc}
         \frac{s^2+Dt^2}{s^2-Dt^2} & \frac{-2st}{s^2-Dt^2} \\
         \frac{-2stD}{s^2-Dt^2} & \frac{s^2+Dt^2}{s^2-Dt^2} \\
       \end{array}
     \right).
\end{equation*}
\par $(2)$ The normalizer of $T_D$ in $SL(2,\mathbb{F}_p)$ is
\begin{equation*}
    N_D=\left\{\left(
        \begin{array}{cc}
          a & b \\
          bD & a \\
        \end{array}
      \right),
      \left(
        \begin{array}{cc}
          x & y \\
          -yD & -x \\
        \end{array}
      \right):
      a^2-b^2D=y^2D-x^2=1,a,b,x,y\in\mathbb{F}_p
    \right\}.
\end{equation*}
\end{theorem}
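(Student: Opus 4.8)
The plan is to exploit the explicit isomorphism $G_D \cong \mathbb{F}_{p^2}^\ast$ from the previous lemma. Under the map \eqref{isomorphism}, the subgroup $T_D$ corresponds precisely to the norm-one subgroup of $\mathbb{F}_{p^2}^\ast$, i.e. the kernel of the norm map $N\colon \mathbb{F}_{p^2}^\ast \to \mathbb{F}_p^\ast$ given by $N(z) = z^{p+1} = z\bar z$, where $\bar z$ denotes the Galois conjugate (sending $\sqrt D \mapsto -\sqrt D$). Indeed $N(x+\sqrt D y) = x^2 - D y^2$, so $T_D$ corresponds to $\{z : N(z)=1\}$. Since $\mathbb{F}_{p^2}^\ast$ is cyclic of order $p^2-1 = (p-1)(p+1)$, the kernel of the norm map is the unique cyclic subgroup of order $p+1$, which proves the order claim and cyclicity in one stroke.

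For part (1), a generator of that kernel is obtained from a generator $\zeta$ of $\mathbb{F}_{p^2}^\ast$ by taking $\zeta^{p-1}$; if $\zeta = s + \sqrt D t$ is primitive, then $\zeta^{p-1} = \zeta^p / \zeta = \bar\zeta / \zeta = (s - \sqrt D t)/(s + \sqrt D t)$. Rationalizing, this equals $\frac{(s-\sqrt D t)^2}{s^2 - D t^2} = \frac{s^2 + D t^2}{s^2 - D t^2} + \sqrt D \cdot \frac{-2st}{s^2 - D t^2}$. Translating back through the inverse of \eqref{isomorphism} (i.e. $u + \sqrt D v \mapsto \left(\begin{smallmatrix} u & v \\ Dv & u\end{smallmatrix}\right)$) gives exactly the displayed matrix. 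One should note that this element has order exactly $p+1$ in the norm-one subgroup: this follows because the norm map restricted to $\langle \zeta \rangle$ is surjective onto $\mathbb{F}_p^\ast$ with kernel of order $p+1$, so $\zeta^{p-1}$ generates that kernel.

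For part (2), the strategy is: first show the displayed set $N_D$ normalizes $T_D$, then a counting argument shows it is the full normalizer. The elements $\left(\begin{smallmatrix} a & b \\ bD & a\end{smallmatrix}\right)$ with $a^2 - b^2 D = 1$ are just $T_D$ itself. For the second family $\left(\begin{smallmatrix} x & y \\ -yD & -x\end{smallmatrix}\right)$ with $x^2 - D y^2 = -1$ (note $y^2 D - x^2 = 1$), one checks directly that conjugation of a general element $\left(\begin{smallmatrix} u & v \\ Dv & u\end{smallmatrix}\right) \in T_D$ by such a matrix yields $\left(\begin{smallmatrix} u & -v \\ -Dv & u\end{smallmatrix}\right)$, which is again in $T_D$ (it is the image of the Galois conjugate $u - \sqrt D v$, consistent with the fact that the normalizer quotient acts as $\mathrm{Gal}(\mathbb{F}_{p^2}/\mathbb{F}_p)$). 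This shows $N_D$ normalizes $T_D$ and that $N_D / T_D$ has order $2$, so $|N_D| = 2(p+1)$; since a maximal torus has index-$2$ normalizer in $SL(2,\mathbb{F}_p)$ (equivalently $|SL(2,\mathbb{F}_p)/N_T| = \tfrac12 p(p-1)$ as quoted in the excerpt), $N_D$ is the full normalizer. One must verify that the second family is nonempty and contributes genuinely new elements — since $D$ is a non-square, $-1$ times a non-square argument: the equation $x^2 - D y^2 = -1$ has solutions precisely because $-1$ lies in the norm group (the norm map is onto $\mathbb{F}_p^\ast$), giving exactly $p+1$ such matrices, consistent with $|N_D| = 2(p+1)$.

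The main obstacle is bookkeeping rather than conceptual: one must be careful that the stated conditions in $N_D$ (namely $a^2 - b^2 D = 1$ versus $y^2 D - x^2 = 1$) really do describe two cosets of $T_D$ whose union is closed under multiplication and inversion, i.e. that $N_D$ is genuinely a group of order $2(p+1)$ and not something larger or a non-subgroup. This is handled cleanly by the $\mathbb{F}_{p^2}^\ast$ picture: $N_D$ is the preimage in $G_D$ of the set $\{z : N(z) = \pm 1\}$ together with the "conjugation" coset, but since $SL(2,\mathbb{F}_p)$ forces determinant $1$ and $\det\left(\begin{smallmatrix} x & y \\ -yD & -x\end{smallmatrix}\right) = -x^2 + D y^2 = 1$, the condition is automatic; the group structure is then inherited and the index computation from the excerpt closes the argument.
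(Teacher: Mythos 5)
Your proof of part (1) is essentially the paper's: both pass through the isomorphism with $\mathbb{F}_{p^2}^\ast$, identify $T_D$ with the norm-one subgroup, and compute $(s+\sqrt{D}t)^{p-1}$ via Frobenius to extract the generator. You make the "$|T_D|=p+1$ via the kernel of the norm map" step more explicit than the paper (which simply asserts $|G_D|/|T_D|=p-1$), but the substance is the same.

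Part (2) is where you genuinely diverge. The paper takes an arbitrary $g=\left(\begin{smallmatrix}a&b\\c&d\end{smallmatrix}\right)$, writes out $g\left(\begin{smallmatrix}x&y\\Dy&x\end{smallmatrix}\right)g^{-1}\in T_D$, and solves the resulting polynomial conditions $ac-bdD=0$, $c^2-d^2D=b^2D^2-a^2D$ together with $ad-bc=1$ to land on exactly the two displayed families. That is a bottom-up determination of the normalizer. You instead go top-down: check the displayed set normalizes $T_D$ (with the Galois-conjugation interpretation of the second family), count $|N_D|=2(p+1)$, and invoke the index $|SL_2(\mathbb{F}_p)/N_T|=\tfrac12 p(p-1)$ to conclude that nothing more can fit. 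Your computation of the conjugation $\left(\begin{smallmatrix}u&v\\Dv&u\end{smallmatrix}\right)\mapsto\left(\begin{smallmatrix}u&-v\\-Dv&u\end{smallmatrix}\right)$ is correct, and the surjectivity of the norm map does supply the needed nonemptiness of the second coset. The trade-off: your argument is conceptually cleaner (it exposes the Weyl-group action as Galois conjugation), but it leans on the stated-but-unproved index formula from Section 2 — a fact that in most treatments is itself \emph{derived} from knowing the normalizer, so there is a latent circularity that the paper's self-contained computation avoids. If you wanted a fully independent proof, you would need to justify $[N_T:T]=2$ by some other route (for instance, the general fact that $N_T/T$ embeds in the Weyl group of $SL_2$, which has order $2$), rather than citing the index count.
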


\begin{proof} (1) By the isomorphism (\ref{isomorphism}), $\left(
                      \begin{array}{cc}
                        s & t \\
                        Dt & s \\
                      \end{array}
                    \right)$
is a generator of the cyclic group $G_D$. Since
$\frac{|G_D|}{|T_D|}=p-1$, $\left(
                      \begin{array}{cc}
                        s & t \\
                        Dt & s \\
                      \end{array}
                    \right)^{p-1}$
 is a generator of the cyclic subgroup $T_D$.
From
$$(s+\sqrt{D}t)^{p-1}=\frac{(s+\sqrt{D}t)^{p}}{s+\sqrt{D}t}=
 \frac{s-\sqrt{D}t}{s+\sqrt{D}t}=\frac{s^2+Dt^2-2st\sqrt{D}}{s^2-Dt^2},$$
we know that $$
    \left(
                      \begin{array}{cc}
                        s & t \\
                        Dt & s \\
                      \end{array}
                    \right)^{p-1}=\left(
       \begin{array}{cc}
         \frac{s^2+Dt^2}{s^2-Dt^2} & \frac{-2st}{s^2-Dt^2} \\
         \frac{-2stD}{s^2-Dt^2} & \frac{s^2+Dt^2}{s^2-Dt^2} \\
       \end{array}
     \right)
$$
is a generator of $T_D$.
\par (2) Suppose $g=\left(
                       \begin{array}{cc}
                         a & b \\
                         c & d \\
                       \end{array}
                     \right)\in N_D.$ Then for every $\left(
            \begin{array}{cc}
              x & y \\
              Dy & x \\
            \end{array}
          \right)\in T_D$,
\begin{equation*}
    \left(
                       \begin{array}{cc}
                         a & b \\
                         c & d \\
                       \end{array}
                     \right)
    \left(
                       \begin{array}{cc}
                         x & y \\
                         Dy & x \\
                       \end{array}
                     \right)
    \left(
                       \begin{array}{cc}
                         a & b \\
                         c & d \\
                       \end{array}
                     \right)^{-1}
    =
    \left(
                       \begin{array}{cc}
                         x+bdyD-acy & a^2y-b^2yD \\
                         d^2yD-c^2y & x+acy-bdyD \\
                       \end{array}
                     \right)\in T_D,
\end{equation*}
which implies
\begin{equation}\label{ssss-eqncN1}
    \left\{
%\begin{array}{c}
\begin{array}{rcl}
 ac-bdD &=& 0, \\
  c^2-d^2D &=& b^2D^2-a^2D.
\end{array}
\right.
\end{equation}
Combining (\ref{ssss-eqncN1}) with
\begin{equation}\label{ssss-eqncN2}
    ad-bc=1,
\end{equation}
we have $g=\left(
\begin{array}{cc}
a & b \\
bD & a \\
\end{array}
\right)$ with $a^2-b^2D=1$ or $g=\left(
\begin{array}{cc}
a & b \\
-bD & -a \\
\end{array}
\right)$ with $a^2-b^2D=-1$. Therefore,
   \begin{equation*}
    N_D=\left\{\left(
        \begin{array}{cc}
          a & b \\
          bD & a \\
        \end{array}
      \right),
      \left(
        \begin{array}{cc}
          x & y \\
          -yD & -x \\
        \end{array}
      \right):
      a^2-b^2D=1, y^2D-x^2=1,a,b,x,y\in\mathbb{F}_p
    \right\}.
\end{equation*}
\end{proof}

Denote by $\pm \sqrt{-1}$ the two roots of the equation $X^2+1=0$.
If $p\equiv 3$ (mod 4), $-1$ is a non-square element, then $\pm
\sqrt{-1} \in \mathbb{F}_{p^2}\setminus \mathbb{F}_{p}$. If $p\equiv
1$ (mod 4), $-1$ is a square element and $\pm \sqrt{-1} \in
\mathbb{F}_p$. Now we define $S$ as a subset of $\mathbb{F}_p$
satisfying (i) for every $x\in S$, $1\leq x\leq \frac{p-1}{2}$, and
(ii) $x \in S \Leftrightarrow \pm \sqrt{-1}x \notin S$. It is
obvious that $|S|=\frac{p-1}{4}$.

\begin{theorem}$(1)$ If $p\equiv 3\  ({\rm mod}\;4)$, let $$R_D=\left\{\left(
                \begin{array}{cc}
                  a & 0 \\
                  c & a^{-1} \\
                \end{array}
              \right): 1\leq a \leq \frac{p-1}{2},0\leq c\leq p-1
\right\},$$ then $R_D$ is a collection of coset representatives of
$N_D$ in $SL(2,\mathbb{F}_p)$.

\par $(2)$ If $p\equiv 1\  ({\rm mod}\;4)$, let $$R^{'}_D=\left\{\left(
                \begin{array}{cc}
                  a & 0 \\
                  c & a^{-1} \\
                \end{array}
              \right), \left(\begin{array}{cc}
                  a & 0 \\
                  c & a^{-1} \\
                \end{array}
              \right)
              \left(\begin{array}{cc}
                  0 & 1 \\
                  -1 & 0 \\
                \end{array}
              \right): a \in S,0\leq c\leq p-1
\right\},$$ then $R^{'}_D$ is a collection of coset representatives
of $N_D$ in $SL(2,\mathbb{F}_p)$.
\end{theorem}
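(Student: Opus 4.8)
The plan is to reduce both parts to a counting argument together with an injectivity check. First, by the description of $N_D$ in Theorem~\ref{lemno}, the two listed families of matrices are (under the isomorphism of the preceding lemma) the fibres over $1$ and over $-1$ of the norm map $\mathbb{F}_{p^2}^{\ast}\to\mathbb{F}_p^{\ast}$, hence each contains exactly $p+1$ elements; since they are clearly disjoint, $|N_D|=2(p+1)$ and the number of left cosets of $N_D$ in $SL(2,\mathbb{F}_p)$ is $\tfrac{p(p^2-1)}{2(p+1)}=\tfrac12 p(p-1)$. A direct count gives $|R_D|=\tfrac{p-1}{2}\cdot p=\tfrac12 p(p-1)$ in part $(1)$, and $|R'_D|=2\cdot|S|\cdot p=2\cdot\tfrac{p-1}{4}\cdot p=\tfrac12 p(p-1)$ in part $(2)$ (the two blocks of $R'_D$ are disjoint, their matrices having $(1,1)$-entry $\neq 0$ and $=0$ respectively). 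So in each case the set has exactly the number of left cosets, and it suffices to show that distinct elements lie in distinct cosets, i.e.\ that $g^{-1}g'\in N_D$ (for $g,g'$ in the set) forces $g=g'$.

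For the injectivity step I would compute $g^{-1}g'$ for each type of pair and compare with the two shapes of elements of $N_D$ provided by Theorem~\ref{lemno}. If $g=\left(\begin{smallmatrix}a&0\\c&a^{-1}\end{smallmatrix}\right)$ and $g'=\left(\begin{smallmatrix}a'&0\\c'&a'^{-1}\end{smallmatrix}\right)$ then $g^{-1}g'$ is lower triangular with diagonal entries $a^{-1}a'$ and $aa'^{-1}$. Matching with $\left(\begin{smallmatrix}\alpha&\beta\\\beta D&\alpha\end{smallmatrix}\right)$ forces $\beta=0$ (so the $(2,1)$-entry vanishes) and, comparing the diagonal entries, $a'=\pm a$; since $1\le a,a'\le\frac{p-1}{2}$ this gives $a'=a$ and then $c'=c$. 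Matching with $\left(\begin{smallmatrix}x&y\\-yD&-x\end{smallmatrix}\right)$ forces $y=0$ and then $(a'/a)^2=-1$. If $p\equiv3\pmod4$ the latter is impossible, since $-1$ is a nonsquare; this finishes $(1)$. If $p\equiv1\pmod4$ it means $a'=\pm\sqrt{-1}\,a$, and since $a'\in[1,\frac{p-1}{2}]$, $a'$ is exactly the representative of $\sqrt{-1}a$ modulo $\{\pm1\}$ in that range, so $a\in S$ and $a'\in S$ together violate condition (ii) in the definition of $S$; hence again $a'=a$, $c'=c$. For two elements of the second block of $R'_D$, writing each as $\left(\begin{smallmatrix}a&0\\c&a^{-1}\end{smallmatrix}\right)w$ one gets $g^{-1}g'=w^{-1}\!\left(\begin{smallmatrix}a^{-1}a'&0\\ \ast&aa'^{-1}\end{smallmatrix}\right)\!w$, which is \emph{upper} triangular of the same diagonal shape, and the argument above applies verbatim.

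It then remains to rule out a coincidence between the two blocks of $R'_D$. For a first-block element $g=\left(\begin{smallmatrix}a&0\\c&a^{-1}\end{smallmatrix}\right)$ and a second-block element $g'=\left(\begin{smallmatrix}a'&0\\c'&a'^{-1}\end{smallmatrix}\right)w$, the matrix $g^{-1}g'$ has $(1,1)$-entry $0$; matching with either shape of an element of $N_D$ forces its $(2,2)$-entry to be $0$ as well, and the off-diagonal entries then force $-D=(a/a')^2$ (when matching the shape coming from $T_D$) or $D=(a/a')^2$ (when matching the other shape) — that is, $-D$, resp.\ $D$, is a nonzero square. Both are impossible: $D$ is a nonsquare by hypothesis, and since $p\equiv1\pmod4$ the element $-D$ is a nonsquare too. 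Hence no first-block element shares a coset with a second-block element, and together with the previous paragraph this shows $g\mapsto gN_D$ is injective on $R'_D$, proving $(2)$.

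The one place that needs care is the bookkeeping in $(2)$: condition (ii) defining $S$ is precisely what kills the ``$(a'/a)^2=-1$'' collision between two elements of the same block, while the nonsquareness of $D$ and of $-D$ is what kills a collision between the two blocks. Conceptually, the statement reflects that the lower-triangular Borel $B^{-}$ satisfies $SL(2,\mathbb{F}_p)=B^{-}N_D$ when $p\equiv3\pmod4$ — so coset representatives of $N_D$ come from $B^{-}$ modulo $B^{-}\cap N_D=\{\pm I\}$, which is what $R_D$ records — whereas when $p\equiv1\pmod4$ one has $|B^{-}\cap N_D|=4$ and $B^{-}$ has two orbits on $SL(2,\mathbb{F}_p)/N_D$, which the Weyl element $w$ interchanges; this is why $R'_D$ needs the extra $w$-twisted block. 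One could reorganize the entire proof along these orbit-counting lines, but the direct computation is the shortest route.
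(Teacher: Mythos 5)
Your proof is correct, and it differs from the paper's in one structural step: you establish surjectivity by a cardinality argument, while the paper establishes it constructively. The paper shows that every $g\in SL(2,\mathbb{F}_p)$ can be moved by right-multiplication by an explicit element of $N_D$ into the desired triangular form (this requires a small case analysis on which of $\xi^2=\tfrac{b^2}{b^2D-a^2}$ or $\xi^2=\tfrac{b^2}{a^2-b^2D}$ is solvable, using that $-1$ is a nonsquare when $p\equiv3\pmod4$ and the extra $w$-twisted block when $p\equiv1\pmod4$), and only then checks that two such triangular representatives lying in the same coset must coincide. You instead observe that $|N_D|=2(p+1)$ (the two families in Theorem~\ref{lemno} being in bijection with the norm-$1$ and norm-$(-1)$ fibres of $\mathbb{F}_{p^2}^{\ast}\to\mathbb{F}_p^{\ast}$, each of size $p+1$), so there are $\tfrac12p(p-1)$ cosets, note $|R_D|=|R'_D|=\tfrac12p(p-1)$, and then do the injectivity check --- which is essentially the same computation the paper does in its uniqueness step, including matching against both shapes in $N_D$, the role of $S$ in killing the $(a'/a)^2=-1$ collision, and the observation that $D$ and $-D$ are both nonsquares when $p\equiv1\pmod4$ to separate the two blocks of $R'_D$. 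The counting route is shorter and sidesteps the paper's solvability casework; the paper's route is more constructive, which is useful since it doubles as a recipe for reducing any $g$ to a normal form. One small imprecision to fix if you write this up: the second family of $N_D$ is not actually in $G_D$, so it is not literally a fibre ``under the isomorphism''; you only need the bijection $(x,y)\leftrightarrow x+\sqrt{D}y$ to count it. Also, in the second-block-versus-second-block case the conjugation by $w$ \emph{swaps} the diagonal entries, not preserves them --- this does not affect the conclusion, but say so.
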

\begin{proof} (1) For $p\equiv 3\  ({\rm mod}\;4)$, suppose that $g=\left(
                       \begin{array}{cc}
                         a & b \\
                         c & d \\
                       \end{array}
                     \right)\in SL(2,\mathbb{F}_p)$. Since $-1$ is a non-square element, one of the equations $\xi^2=\frac{b^2}{b^2D-a^2}$
                     and
$\xi^2=\frac{b^2}{a^2-b^2D}$ is solvable over $\mathbb{F}_p$ for
$b\neq 0$. If $y$ is a root of $\xi^2=\frac{b^2}{b^2D-a^2}$, let
$x=\frac{ay}{b}$, then
\begin{equation} \label{matrix}
         \left(
             \begin{array}{cc}
              a & b \\
              c & d \\
             \end{array}
         \right)
    \left(
    \begin{array}{cc}
    x & y \\
    -Dy & -x \\
    \end{array}
    \right)
    =
    \left(
    \begin{array}{cc}
     ax-byD & 0 \\
     cx-dyD & cy-dx \\
     \end{array}
     \right).
\end{equation}
If $y$ is a root of $\xi^2=\frac{b^2}{a^2-b^2D}$, let
$x=-\frac{ay}{b}$, then

\begin{equation*}
         \left(
             \begin{array}{cc}
              a & b \\
              c & d \\
             \end{array}
         \right)
    \left(
                       \begin{array}{cc}
                         x & y \\
                         Dy & x \\
                       \end{array}
                     \right)
    =
    \left(
                       \begin{array}{cc}
                         ax+byD & 0\\
                         cx+dyD & cy+dx \\
                       \end{array}
                     \right).
\end{equation*}
If $b=0$, then
\begin{equation*}
         \left(
             \begin{array}{cc}
              a & b \\
              c & d \\
             \end{array}
         \right)
    =
    \left(
                       \begin{array}{cc}
                         a & 0 \\
                         c & a^{-1} \\
                       \end{array}
                     \right).
\end{equation*}
Therefore, every coset of $N_D$ in $SL(2,\mathbb{F}_p)$ has a
representative of the form $\left(
                                     \begin{array}{cc}
                                       a & 0 \\
                                       c & a^{-1} \\
                                     \end{array}
                                   \right)$. If two such matrices
$\left(
                                     \begin{array}{cc}
                                       a & 0 \\
                                       c & a^{-1} \\
                                     \end{array}
                                   \right)$
and $\left(
                                     \begin{array}{cc}
                                       x & 0 \\
                                       y & x^{-1} \\
                                     \end{array}
                                   \right)$
are in the same coset, then

\begin{equation*}
\left(
    \begin{array}{cc}
         a & 0 \\
         c & a^{-1} \\
          \end{array}
\right)^{-1} \left(
    \begin{array}{cc}
    x & 0 \\
    y & x^{-1} \\
    \end{array}
\right) = \left(
  \begin{array}{cc}
    xa^{-1} & 0 \\
    ay-cx & ax^{-1} \\
  \end{array}
\right)\in N_D.
\end{equation*}
Thus we have

\begin{equation}\label{sss-eqqwn1}
\left\{\begin{array}{rcl}
  xa^{-1} &=& ax^{-1}, \\
  ay &=& cx,
\end{array}
\right.
\end{equation}
or
\begin{equation}\label{sss-eqqwn2}
\left\{\begin{array}{rcl}
  xa^{-1} &=& -ax^{-1}, \\
  ay &=& cx,
\end{array}
\right.
\end{equation}
Equation (\ref{sss-eqqwn2}) gives $(xa^{-1})^2=-1$, which is
impossible since $p\equiv 3\  ({\rm mod}\;4) $. From
(\ref{sss-eqqwn1}), we have
\begin{equation*}
\left\{\begin{array}{rcl}
  a^2 &=& x^2, \\
  ay &=& cx,
\end{array}
\right.
\end{equation*}
which implies
\begin{equation*}
\left\{\begin{array}{c}
  a=x,\\
  c=y,
\end{array}
\right. \text{~or~} \left\{\begin{array}{c}
  a=-x,\\
  c=-y.
\end{array}
\right.
\end{equation*}
Therefore $R_D$ is a collection of coset representatives of $N_D$ in
$SL(2,\mathbb{F}_p)$.

(2)For  $p\equiv 1\  ({\rm mod}\;4) $, suppose that $g=\left(
                       \begin{array}{cc}
                         a & b \\
                         c & d \\
                       \end{array}
                     \right)\in SL(2,\mathbb{F}_p)$.
If $\xi^2=\frac{b^2}{b^2D-a^2}$ is solvable over $\mathbb{F}_p$,
then by (\ref{matrix}), there exist a lower triangle matrix, such
that $\left(
                       \begin{array}{cc}
                         a & b \\
                         c & d \\
                       \end{array}
                     \right)$ and this lower triangle matrix are in the
same coset.

Otherwise $\xi^2=\frac{a^2}{a^2D-b^2D^2}$ is solvable over
$\mathbb{F}_p$. let $y$ be a root of $\xi^2=\frac{b^2}{b^2D-a^2}$
and  $x=\frac{byD}{a}$, then
\begin{equation*}
         \left(
             \begin{array}{cc}
              a & b \\
              c & d \\
             \end{array}
         \right)
    \left(
    \begin{array}{cc}
    x & y \\
    -Dy & -x \\
    \end{array}
    \right)
    =
    \left(
    \begin{array}{cc}
     0 & ay-bx \\
     cx-dyD & cy-dx \\
     \end{array}
     \right).
\end{equation*}

Therefore, every coset representative has the form $\left(
                                     \begin{array}{cc}
                                       a & 0 \\
                                       c & a^{-1} \\
                                     \end{array}
                                   \right)$
or $\left(
                                     \begin{array}{cc}
                                       0 & x^{-1}\\
                                       -x & y \\
                                     \end{array}
                                   \right)$.

If two such matrices $\left(
                                     \begin{array}{cc}
                                       a & 0 \\
                                       c & a^{-1} \\
                                     \end{array}
                                   \right)$
and $\left(
                                     \begin{array}{cc}
                                       x & 0 \\
                                       y & x^{-1} \\
                                     \end{array}
                                   \right)$
are in the same coset, then equations (\ref{sss-eqqwn1}) and
(\ref{sss-eqqwn2}) imply
\begin{equation*}
\left\{\begin{array}{c}
  a=x,\\
  c=y,
\end{array}
\right. \text{~or~} \left\{\begin{array}{c}
  a=-x,\\
  c=-y,
\end{array}
\right. \text{~or~} \left\{\begin{array}{c}
  a=\sqrt{-1}x,\\
  c=\sqrt{-1}y,
\end{array}
\right. \text{~or~} \left\{\begin{array}{c}
  a=-\sqrt{-1}x,\\
  c=-\sqrt{-1}y,
\end{array}
\right.
\end{equation*} where $\sqrt{-1}$ is the smaller root of the equation $X^2+1=0$. Therefore $$\left\{\left(
                \begin{array}{cc}
                  a & 0 \\
                  c & a^{-1} \\
                \end{array}
              \right): a \in S,0\leq c\leq p-1
\right\}$$ includes $\frac{p(p-1)}{4}$ different coset
representatives of $N_D$ in $SL(2,\mathbb{F}_p)$.

Similarly,
$$\left\{\left(
                \begin{array}{cc}
                  0 & x^{-1} \\
                  -x & y \\
                \end{array}
              \right): x \in S,0\leq y\leq p-1
\right\}$$ includes $\frac{p(p-1)}{4}$ coset different
representative elements of $N_D$ in $SL(2,\mathbb{F}_p)$.
\par
If two such matrices $\left(
                                     \begin{array}{cc}
                                       a & 0 \\
                                       c & a^{-1} \\
                                     \end{array}
                                   \right)$
and $\left(
                                     \begin{array}{cc}
                                       0 & x^{-1}\\
                                       -x & y \\
                                     \end{array}
                                   \right)$
are in the same coset, then
\begin{equation*}
\left(
    \begin{array}{cc}
         a & 0 \\
         c & a^{-1} \\
          \end{array}
\right)^{-1} \left(
    \begin{array}{cc}
    0 & x^{-1} \\
    -x & y \\
    \end{array}
\right) = \left(
  \begin{array}{cc}
    0 & a^{-1}x^{-1} \\
    -ax & -cx^{-1}+ay \\
  \end{array}
\right)\in N_D.
\end{equation*}
Thus we have $a^{-1}x^{-1}=-Dax$ or $a^{-1}x^{-1}=Dax$, which
implies $-D$ or $D$ is a square element of $\mathbb{F}_p$. Both of
them are impossible since $p\equiv 1\ ({\rm mod}\;4)$.

Therefore,
\begin{eqnarray*}
R^{'}_D&=&\left\{\left(
                \begin{array}{cc}
                  a & 0 \\
                  c & a^{-1} \\
                \end{array}
              \right), \left(
                \begin{array}{cc}
                  0 & x^{-1} \\
                  -x & y \\
                \end{array}
              \right): a,x \in S,0\leq c,y\leq p-1
\right\}\\
&=&\left\{\left(
                \begin{array}{cc}
                  a & 0 \\
                  c & a^{-1} \\
                \end{array}
              \right), \left(\begin{array}{cc}
                  a & 0 \\
                  c & a^{-1} \\
                \end{array}
              \right)
              \left(\begin{array}{cc}
                  0 & 1 \\
                  -1 & 0 \\
                \end{array}
              \right): a \in S,0\leq c\leq p-1
\right\}
\end{eqnarray*}
is a collection of coset representatives of $N_D$ in
$SL(2,\mathbb{F}_p)$.
\end{proof}

In the following, we give an algorithm for constructing the
non-split finite oscillator dictionary $\mathfrak{S}^{ns}$.

\par \noindent{\bf Algorithm for $\mathfrak{S}^{ns}$}
%\par 1. Choose an odd prime.
%\par 2. Computer generator $g_A$ for standard torus $A$.
%\par 3. Diagonalize $\rho(g_A)$ and obtain the basis $B_A$.
\begin{enumerate}
  \item For a given $p$, choose a non-square element $D$ of $\mathbb{F}_p$, and $s$ and $t$ such that $
s+t\sqrt{D}$ is a primitive element of $\mathbb{F}_{p^2}$.
  \item Let $g_D=\left(
       \begin{array}{cc}
         \frac{s^2+Dt^2}{s^2-Dt^2} & \frac{-2st}{s^2-Dt^2} \\
         \frac{-2stD}{s^2-Dt^2} & \frac{s^2+Dt^2}{s^2-Dt^2} \\
       \end{array}
     \right).$ Diagonalize $\rho(g_D)$ to obtain
     $\mathcal{B}_T$.
  \item If $p\equiv 3\  ({\rm mod}\;4)$, then
  $$\mathfrak{S}^{ns}=\left\{S_a\circ N_{ac}(\varphi):~\varphi\in\mathcal{B}_T, 1\leq a \leq \frac{p-1}{2},0\leq c\leq p-1 \right\}.$$

If $p\equiv 1\  ({\rm mod}\;4)$, then
  $$\mathfrak{S}^{ns}=\left\{S_a\circ N_{ac}(\varphi):~\varphi\in\mathcal{B}_T \cup F\circ \mathcal{B}_T, a \in S,0\leq c\leq p-1 \right\}.$$
\end{enumerate}

\begin{remark}
The above results can be easily generalized from $\mathbb{F}_p$ to
$\mathbb{F}_{p^n}$.
\end{remark}

\bibliographystyle{amsplain}

\end{document}